\newcommand{\authorfootnotes}{\renewcommand\thefootnote{\@fnsymbol\c@footnote}}
\numberwithin{equation}{section}
\newtheorem{theorem}{Theorem}[section]
\newtheorem{remark}[theorem]{Remark}
\newtheorem{definition}[theorem]{Definition}
\newtheorem{corollary}[theorem]{Corollary}
\begin{document}

\title[Consistency of extended Nelson-Siegel curves]{Consistency of extended Nelson-Siegel curve families with the \\ Ho-Lee and Hull and White short rate models}
\author{Patricia Kisbye}
\dedicatory{FaMAF. Universidad Nacional de Córdoba}

\thanks{Partially supported by SeCyT-UNC, code number 30720150100227CB}
\thanks{Address correspondence to Patricia Kisbye, FaMAF, Universidad Nacional de Córdoba;\\ e-mail: patricia.kisbye@unc.edu.ar}

\author{Karem Meier}
\thanks{Partially supported by SeCyT-UNC, code number 30720150100227CB; and CONICET grant.}
\thanks{Address correspondence to Karem Meier, FaMAF, Universidad Nacional de Córdoba;\\ e-mail: kam0107@famaf.unc.edu.ar}

%
\date{}
\thanks{This paper is in final form and no version of it will be submitted for
publication elsewhere.}


\begin{abstract}
    Nelson and Siegel curves are widely used to fit the observed term structure of interest rates in a particular date. By the other hand, several interest rate models have been developed such their initial forward rate curve can be adjusted to any observed data, as the Ho-Lee and the Hull and White one factor models. In this work we study the evolution of the forward curve process for each of this models assuming that the initial curve is of Nelson-Siegel type. We conclude that the forward curve process produces curves belonging to a parametric family of curves that can be seen as extended Nelson and Siegel curves.\\ \\
\tiny{ KEY WORDS: Nelson-Siegel curves, short rate interest models, consistency.}
 
\end{abstract}

\maketitle

\section{INTRODUCTION}

A standard procedure when dealing with concrete interest rate models is to calibrate the initial forward curve with the market observed data. That is the case of the Ho and Lee, and Hull and White models, where every curve can be perfectly fitted by adjusting the model parameters.
By the other hand, some parametric curves are extendedly used to fit daily data, as is the case of the Nelson-Siegel curves \cite{ns}
\begin{equation}
    \label{nelson-siegel}
f_{NS}(\tau)=z_1+z_2\,e^{-\lambda\tau}+z_3\tau\,e^{-\lambda\tau}, \qquad \tau  \ge 0 ,
\end{equation}

with $z_1$, $z_2$, $z_3$ and $\lambda$ being specified parameters. So this means that it is possible to choose the Ho-Lee and Hull-White models parameters in such a way that the initial forward rate curve fits with an specific Nelson and Siegel curve. 
In this work we show that in this particular case, the following forward rate curves moves on a manifold generated by specific parametric forward curves that can be written as a sum of a Nelson and Siegel curve and a linear or an exponential function, depending upon the short rate model.
A Nelson-Siegel curve can be decomposed in three factors: $1$, $e^{-\lambda \tau}$ and $\tau\,e^{-\lambda \tau}$. The constant factor is related with the long term interest rate level. The exponential decay is the second factor, with an upward slope if $z_2>0$ or downward if $z_2<0$. The third factor gives a hump or o a trough, depending on $z_3$. Finally, $\lambda$ is called the \textit{shape parameter}, and it determines the critical point of the third factor and the steepness of the hump/trough. (See \cite{jan}).

We prove that the forward rate curves produced by the Ho-Lee model and the Hull and White models when starting with a Nelson and Siegel curve is decomposed in four factors. Three of them are the same as in the Nelson and Siegel curve, and the fourth is a linear function $(\tau)$ in the Ho-Lee model or an exponential function ($c_1\,e^{-a\tau}+c_2\,e^{-2a\tau}$) in the Hull and White model, where $a$ is a model parameter. 
This result extends part of Bjork and Christensen (see \cite{bc}) paper results, where they proved that the above two models are  inconsistent with a strictly Nelson-Siegel manifold.

In sections \S 3 and \S 4, we present the Ho-Lee and Hull-White models, and derive the formula for the corresponding forward curve. In each case, we choose a Nelson and Siegel curve as the initial forward curve and then prove that the following ones are extended Nelson-Siegel curves in the sense that they can be written as (\ref{nelson-siegel}) plus a linear function or an exponential function.

In particular we also prove that each of these two short rate models are consistent with a forward curve manifold $\mathcal G^\lambda$, for each $\lambda>0$.

\subsection{Notation and facts}
In this section we assume a probability space $(\Omega, \mathcal F, Q)$. Let $W(t)$, $t\ge 0$ be a Wiener process, and $\{\mathcal F_t\}_{t\ge 0}$ be the filtration generated by $W(t)$. Let $(\Omega, \mathcal F, Q, \{F_t\}_{t\ge 0})$ denote the filtered probability space. A stochastic process $\alpha$ is called an \textit{adapted process} if $\alpha(t)$ is $\mathcal F_t$-measurable for every $t\ge 0$. An Ito process with \textit{drift} $\mu$ and \textit{volatility} $\sigma$ is a stochastic process $X(t)$ such that 
\begin{equation}
X(t)=X(0)+\int_0^t \mu(s,X(s))\,ds + \int_0^t \sigma(s,X(s))\,dW(s), \qquad t \ge 0, \label{ito1}
\end{equation} where $\mu$ and $\sigma$ are adapted process and the second integral at the right hand side is an Ito integral, \cite{oks}.
Equation (\ref{ito1}) is usually written in terms of a \textit{stochastic differential equation}
\begin{equation}\label{ito2}
dX(t)=\mu(t,X(t))dt + \sigma(t,X(t))dW(t).
\end{equation} 

We also introduces the relationship between the Ito and the Stratonovich integral forms. If $X$ is an Ito process as in  (\ref{ito2}), then its Stratonovich integral form is as follows,
\begin{equation}
    \label{Stratonovich}
    dX(t)=(\mu(t,X(t)) +\phi(t,X(t)))  \,dt + \sigma(t,X(t))\circ dW(t),
\end{equation}
where $\circ$ denotes the Stratonovich integration an  $\phi(t,X(t))$ is a quadratic co-variance term, 
(see \cite{oks}). If $\sigma$ in (\ref{ito2}) is deterministic, then $\phi(t,X(t))=0$.

We assume the existence of zero coupon bond market $\{P(t,T), \ 0 \le t \le T\}$, where  $P(t,T)$ denotes the price  at time $t$ of a bond with \textit{maturity} $T$. We call this bond a $T$-bond. We assume that for each $t$, the curve $T\mapsto P(t,T)$ is differentiable, with positive values and that $\{P(t,T),\ t\ge 0\}$ follows an Ito process for each $T  \ge 0 $. The \textit{forward rate curve} associated to this bonds is given by
$$f(t,T)=-\frac{\partial \ln P(t,T)}{\partial T},$$
and the short interest rate is given by $r(t)=f(t,t)$. We shall suppose an arbitrage free model and we denote with $Q$ the corresponding martingale measure. Every Ito process shall be described in terms of the $Q$-measure. As in the Heath, Jarrow y Morton (HJM) \cite{hjm} framework, we assume that the forward rate curve dynamics is given by a family of stochastic differential equations, which expression under $Q$ is given by
\begin{equation}
df(t,T)=\alpha(t,T)\,dt + \sigma_0(t,T)\,dW(t),
\label{proceso-dF}
\end{equation}
with $\alpha$ and $\sigma_0$ adapted processes. The hypothesis of an arbitrage free market implies the HJM-drift condition on $\alpha$. More precisely
\begin{equation}\alpha(t,T)=\sigma_0(t,T)\int_t^T \sigma_0'(t,s)\,ds \label{condicion-tendencia}, \end{equation}
where the superscript in $\sigma'$ denotes transpose in case of a vectorial process. 

Given a $T$-bond, we denote  $\tau=T-t$ the time \textit{up to maturity} of the bond. The Brace and Musiela parametrization \cite{bm} describes the forward curve process in terms of $t$ and $\tau$ as follows,
\begin{equation}
    f_r(t,\tau)=f(t,t+\tau),
    \label{r-forward}
\end{equation}
and so  $r(t)=f_r(t,0)$.
Under this parametrization, equation (\ref{proceso-dF}) can be written as
\begin{equation}
df_r(t,\tau)=\left(\frac{\partial}{\partial \tau}f_r(t,\tau) +\sigma(t,\tau)\int_0^\tau \sigma'(t,s)\,ds \right)\,dt+\sigma(t,\tau)\,dW(t),
\label{proceso-df}
\end{equation}
where $\sigma(t,\tau)=\sigma_0(t,t+\tau)$.

In  particular, the forward rate process (\ref{proceso-df}) can be expressed in terms of the Stratonovich integral form as
\begin{eqnarray}
df_r(t,\tau)&=&\left(\frac{\partial }{\partial \tau}f_r(t,\tau) +\sigma(t,\tau)\int_0^\tau \sigma'(t,s)\,ds +
\phi(t,\tau)
\right)\,dt  \notag \\&&\quad + \sigma(t,\tau)\circ dW(t),
\label{proceso-df-Stratonovich}
\end{eqnarray}

\subsection{Consistency}
Consistency between short rate models and forward curves manifolds were stated by Björk and Christensen in \cite{bc}. To make this work more self contained, we recall some of their definitions and the main theorem.
Let $\mathcal M$ a given one factor interest rate model specifying a forward rate process $f_r(t,\cdot)$. In terms of the Musiela parametrization, $f_r$ satisfies a stochastic differential equation:
\begin{equation}
    \label{general-forward}
df_r(t,\tau)=\left(\frac{\partial}{\partial \tau}f_r(t,\tau)+\alpha(t,\tau)\right)\,dt + \sigma(t,\tau)dW(t), 
\end{equation}
$t\ge 0$, $\tau  \ge 0 $, where $\alpha$ and $\sigma$ are adapted processes. In particular, the no arbitrage Heath, Jarrow and Morton (HJM) drift condition implies that $\alpha(t,x)=\sigma(t,x)\int_t^x \sigma(t,s)\,ds$. Also, using the Stratonovich integral form, (\ref{general-forward}) can be written as
\begin{eqnarray}
    df_r(t,\tau)&=& \left(\frac{\partial}{\partial \tau}f_r(t,\tau)+\sigma(t,x)\int_t^{\tau} \sigma(t,s)\,ds+\phi(t,\tau)\right)\,dt \notag  \\& &\quad + \sigma(t,\tau)\circ dW(t).\label{stratonovich-forward}
\end{eqnarray}
Björk and Christensen \cite{bc} stated the following definitions of consistency between a short rate model and a parametric family of curves. At first, let $\mathcal Z\subseteq \mathbb R^d$ be a set of parameters, and let $G:\mathcal Z\mapsto C[0,\infty)$ be a smooth function. The \textit{forward curve manifold} $\mathcal G$ is defined as $\mathcal G=\mbox{Im}(G)$. That is,
$$\mathcal G=\{G(\cdot;z):[0,\infty)\mapsto \mathbb R\},$$
where, with some abuse of notation $G(\cdot;z)$ denotes the function $G(z)$. 

\begin{definition}(invariance). 
Consider a given interest rate model $\mathcal M$, specifying a forward rate process $f_r(t, \cdot)$, and a forward curve manifold $\mathcal G$. We say that $\mathcal G$ is \textit{invariant under the action of $f_r$} if, for every fixed initial time $s$, the condition $f_r(s, \cdot ) \in \mathcal G$ implies that $f_r(t,\cdot) \in \mathcal G$, for all $t\ge s$, a.s.
\end{definition}
Björk and Christensen also stated a more restricted concept of invariance, the $f_r$-invariance.
\begin{definition}
($f_r$-invariance). Consider a given interest rate model $\mathcal M$, specifying a forward rate process $f_r(t, \cdot)$ as in (\ref{stratonovich-forward}), as well as a forward curve manifold $\mathcal G$. We say that $\mathcal G$ is $f_r$-invariant under the action of the forward rate process
$f_r(t, \cdot)$ if there exists a stochastic process $Z$ with state process $\mathcal Z$ and possessing
a Stratonovich differential of the form
$dZ(t) = \gamma (t, Z(t))dt + \psi (t, Z(t)) \circ dW (t),$
such that, for every fixed choice of initial time $s$, whenever $y(s, \cdot )\in \mathcal G$, the
stochastic process defined by
$$y(t, \tau ) = G (\tau ; Z(t)), \qquad \forall t \ge s,\  x \ge  0,$$
satisfies the SDE (\ref{stratonovich-forward})
 with initial condition $f_r(s, \cdot) = y(s, \cdot)$.

\end{definition}
In this case, we say that the short rate model $\mathcal M$ and the manifold $\mathcal G$ are \textit{consistent}. 
Is easy to prove that $f_r$-invariance implies invariance. Moreover, Björk and Christensen proved the following theorem.
\begin{theorem}
\label{theorem-consistency}
The forward curve manifold $\mathcal G$ is $f_r$-invariant for the forward rate process $f_r(t, \cdot)$ in $\mathcal M$ if and only if
\begin{eqnarray}
G_\tau (\cdot; z) + \sigma (t, \cdot)\int_0^{(\cdot)} \sigma'(t,s)\,ds + \phi(t,\cdot) &\in& \mbox{Im }[G_z(\cdot;z)] \label{drift-condition}\\
\sigma(t,\cdot)&\in& \mbox{Im }[G_z(\cdot;z)] \label{volatility-condition}
\end{eqnarray}
for all $(t, z) \in \mathbb [0,\infty)\times \mathcal Z$.
$G_\tau$ and $G_z$ denote the Frechet derivatives of $G$ with respect to $\tau$ and $z$, which are assumed to exist.
\end{theorem}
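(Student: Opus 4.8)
The plan is to exploit the fact that Stratonovich differentials obey the ordinary chain rule, so that the dynamics of a curve riding on $\mathcal G$ can be obtained by formally differentiating $G$. Throughout, the running-to-maturity variable $\tau$ is treated as a parameter and one studies the finite-dimensional Stratonovich process $Z$ that steers the point $G(\cdot;Z(t))\in\mathcal G$.

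\emph{Necessity.} Suppose $\mathcal G$ is $f_r$-invariant, so there is a process $Z$ with values in $\mathcal Z$ and $dZ(t)=\gamma(t,Z(t))\,dt+\psi(t,Z(t))\circ dW(t)$ such that $y(t,\tau):=G(\tau;Z(t))$ solves (\ref{stratonovich-forward}) whenever $y(s,\cdot)\in\mathcal G$. The Stratonovich chain rule applied to $G(\tau;Z(t))$, with $\tau$ fixed, gives
\begin{align*}
dy(t,\tau)&=G_z(\tau;Z(t))\,\gamma(t,Z(t))\,dt\\
&\quad+G_z(\tau;Z(t))\,\psi(t,Z(t))\circ dW(t),
\end{align*}
and since $\partial_\tau y(t,\tau)=G_\tau(\tau;Z(t))$, equating this with the right-hand side of (\ref{stratonovich-forward}) forces equality of two Stratonovich representations of the same process. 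Passing both to Itô form and invoking uniqueness of the semimartingale decomposition (first the martingale part, then, since the Itô--Stratonovich corrections thereby agree, the drift) yields
\begin{align*}
G_z(\tau;Z(t))\,\psi(t,Z(t))&=\sigma(t,\tau),\\
G_z(\tau;Z(t))\,\gamma(t,Z(t))&=G_\tau(\tau;Z(t))+\sigma(t,\tau)\int_0^\tau\sigma'(t,s)\,ds+\phi(t,\tau).
\end{align*}
Taking $t=s$, where the starting point $Z(s)$ may be chosen to be any $z\in\mathcal Z$ by selecting the initial curve $y(s,\cdot)=G(\cdot;z)$, the first relation says $\sigma(s,\cdot)\in\mathrm{Im}[G_z(\cdot;z)]$ and the second says $G_\tau(\cdot;z)+\sigma(s,\cdot)\int_0^{(\cdot)}\sigma'(s,u)\,du+\phi(s,\cdot)\in\mathrm{Im}[G_z(\cdot;z)]$. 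As $s\ge0$ is arbitrary, (\ref{drift-condition}) and (\ref{volatility-condition}) hold.

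\emph{Sufficiency.} Assume (\ref{drift-condition}) and (\ref{volatility-condition}). For every $(t,z)$ these hypotheses let one pick vectors $\psi(t,z),\gamma(t,z)\in\mathbb R^d$ with
\begin{align*}
G_z(\cdot;z)\,\psi(t,z)&=\sigma(t,\cdot),\\
G_z(\cdot;z)\,\gamma(t,z)&=G_\tau(\cdot;z)+\sigma(t,\cdot)\int_0^{(\cdot)}\sigma'(t,s)\,ds+\phi(t,\cdot).
\end{align*}
Let $Z$ solve the Stratonovich SDE $dZ(t)=\gamma(t,Z(t))\,dt+\psi(t,Z(t))\circ dW(t)$ started from a $z$ with $G(\cdot;z)=y(s,\cdot)$. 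Applying the Stratonovich chain rule to $y(t,\tau):=G(\tau;Z(t))$, substituting the two defining relations and using $\partial_\tau y(t,\tau)=G_\tau(\tau;Z(t))$, one checks that $y$ satisfies (\ref{stratonovich-forward}) with $f_r(s,\cdot)=y(s,\cdot)$. Hence $\mathcal G$ is $f_r$-invariant.

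The step I expect to be the main obstacle is the rigorous justification of the chain-rule computation, which requires an Itô/Stratonovich formula for the $C[0,\infty)$-valued process $t\mapsto G(\cdot;Z(t))$; this is exactly where the smoothness of $G$ and the existence of the Fréchet derivatives $G_\tau$ and $G_z$ enter. A secondary technical point, in the sufficiency direction, is that the selections $z\mapsto\gamma(t,z)$ and $z\mapsto\psi(t,z)$ must be chosen regular enough (for instance locally Lipschitz) for the $Z$-equation to admit a solution. Once the chain rule is in place, matching drift and volatility is routine.
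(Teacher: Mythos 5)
The paper itself gives no proof of this theorem: it is quoted from Bj\"ork and Christensen \cite{bc}, so there is nothing internal to compare against. Your argument is, in outline, exactly the one in the original source: apply the Stratonovich chain rule to $t\mapsto G(\tau;Z(t))$, use that Stratonovich differentials have no correction term so the formal computation is the true one, and then identify the martingale and finite-variation parts of the two representations to match volatility and drift; necessity follows by letting the initial point $Z(s)=z$ range over $\mathcal Z$, sufficiency by solving the $Z$-equation built from the selections $\gamma,\psi$. You have also correctly located the two places where the argument is not yet rigorous as written: the infinite-dimensional It\^o/Stratonovich formula for the $C[0,\infty)$-valued process (Bj\"ork and Christensen handle this by working in a suitable Hilbert/Banach space of forward curves with enough regularity on $G$), and the need for a sufficiently regular selection of $\gamma(t,z)$ and $\psi(t,z)$ in the sufficiency direction so that the $Z$-SDE is solvable. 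With those caveats acknowledged, the proposal is a faithful reconstruction of the standard proof.
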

\begin{definition}
An interest rate model $\mathcal M$ is \textbf{consistent} with the forward rate manifold $\mathcal G$ if the consistent drift and
volatility conditions (\ref{drift-condition})-(\ref{volatility-condition}) hold.
\end{definition}

\section{THE HO-LEE SHORT RATE MODEL}
The short rate model proposed by Ho and Lee \cite{hl}, (henceforth HL) has a dynamic given by the $r$-process
\begin{equation}
\label{HoLee}
    dr(t)=\theta (t) dt+ \sigma dW(t).
\end{equation}
In (\ref{HoLee}), $\{W(t),\,t \ge 0\}$ is a Wiener process,  $\sigma>0$ and $\theta$ is a deterministic function. The HL model belongs to the family of affine short rate models. That is, if $P(t,T)$ denotes the price at time $t$ of a zero coupon bond with maturity $T$, then the term structure of the interest rate is given by:
\begin{equation}
    \label{affine}
    P(t,T)=e^{A(t,T)-r(t)B(t,T)},
    \end{equation}
for certain functions $A$ and $B$. In particular, in the case of the Ho-Lee model, $A$ and $B$ are given by:
\begin{equation}\label{B-HoLee}
    B(t,T)=T-t
\end{equation}
\begin{equation}\label{A-HoLee}
    A(t,T)=\int_t^T \theta(s) (s-T)ds + \frac{\sigma^2 (T-t)^3}{6}
\end{equation}
(see for instance \cite{brigo}). 
The forward rate curve is related with the term structure by the equation:
\begin{equation}\label{forward-curve}
    f(t,T)=-\frac{\partial \ln(P(t,T))}{\partial T}= -\frac{\partial A(t,T)}{\partial T}+ \frac{\partial B(t,T)r(t)}{\partial T}.
\end{equation}
Then, in this case replacing $A$ and $B$ by the expressions in (\ref{A-HoLee}) and (\ref{B-HoLee}), we get: 
\begin{eqnarray}\label{forward-HoLee}
f(t,T)&=& \frac{\partial}{\partial T}\left(-\int_t^T \theta(s) (s-T)ds  - \frac{\sigma^2 (T-t)^3}{6}\right) + r(t)\frac{\partial}{\partial T}(T-t) \notag\\
 &=&-\left(\theta(T) T-\int_t^T \theta(s)ds-T \theta(T)\right) - \frac{\sigma^2}{2} (T-t)^2+r(t) \notag\\
 &=&\int_t^T \theta(s)ds - \frac{\sigma^2}{2} (T-t)^2+r(t).
 \end{eqnarray}
 In particular it holds that $f(t,t)=r(t)$. If $T\mapsto f^*(0,T)$ is the observed initial forward curve, and $\theta$ is defined as
\begin{equation*}
    \theta(t)=\sigma^2 t + \frac{\partial f^*}{\partial T} (0, t),
\end{equation*}
then $f(0,T)=f^*(0,T)$. That is, the model parameters can be adjusted such that the initial forward curve fits the observed one.

 We now assume that the initial forward curve is given by a Nelson and Siegel parametric curve. That is, we define:
\begin{equation}\label{curva-inicial}
    f^*(0,T)=z_1+z_2 e^{-\lambda T} + z_3  T e^{-\lambda T}, \qquad T  \ge 0 ,
\end{equation}
where $z_1$, $z_2$, $z_3$ and $\lambda $ are fixed real numbers, $\lambda > 0$.  We want to study the evolution of this initial curve in the $t$ variable. With this particular choice of $f^*$, $\theta$ is given by:
\begin{equation*}
    \theta (t)=\sigma^2 t + (z_3- z_2 \lambda)e^{-\lambda t} - z_3 \lambda t e^{-\lambda t}
\end{equation*}
and the solution of the Ho-Lee stochastic differential equation (\ref{HoLee}) is given by:
\begin{align*}
    r(t)=&r(0)+\frac{\sigma^2 t^2}{2} - \frac{z_3-z_2 \lambda}{\lambda} e^{-\lambda t} - \frac{z_3}{\lambda} (1-e^{-\lambda t} (\lambda t+1)) + \sigma W(t)\\
    =&r(0)+\frac{\sigma^2 t^2}{2} -\frac{z_3}{\lambda} + [z_2+z_3t]e^{-\lambda t} + \sigma W(t),
\end{align*}
where $r(0)$ is the short rate value at time $t=0$. 

We compute the integral term in (\ref{forward-HoLee}):
\begin{eqnarray*}
\int_t^T \theta(s)ds&=&\int_t^T \sigma^2 s + (z_3- z_2 \lambda)e^{-\lambda s} - z_3 \lambda t e^{-\lambda s}ds \\
&=&\frac{\sigma^2}{2}(T^2-t^2)+ (z_2+z_3T)\, e^{-\lambda T}-(z_2+z_3t)\,e^{-\lambda t}.
\end{eqnarray*}
We can now derive an explicit formula for the forward rate curve process:
\begin{eqnarray}
f(t,T)&=&\frac{\sigma^2}{2}(T^2-t^2)+ (z_2+z_3T)\, e^{-\lambda T}-(z_2+z_3t)\,e^{-\lambda t} - \frac{\sigma^2}{2} (T-t)^2+r(t)\notag\\
&=&\sigma^2\,t(T-t)+ (z_2+z_3T)\, e^{-\lambda T}-(z_2+z_3t)\,e^{-\lambda t} +r(t).\label{HoLee-final}
\end{eqnarray}
The above computations allow us to state the following theorem.
\begin{theorem}\label{teorema-HoLee}
Let $r$ denote the HL short rate with a dynamic as stated in (\ref{HoLee}). Then, if the initial forward rate curve is a Nelson and Siegel parametric curve as in (\ref{curva-inicial}), the corresponding forward rate curve at time $t$ is given by the formula
$$f(t,T)= r(t)+\sigma^2\,t(T-t)+ (z_2+z_3T)\, e^{-\lambda T}-(z_2+z_3t)\,e^{-\lambda t} .$$
Let $\tau=T-t$ the time up to maturity. Using the Brace and Musiela parametrization, we denote $f_{HL}$ the forward curve given by $f_{HL}(t,\tau)=f(t,t+\tau)$. Then the forward rate curve $f_{HL}$ has the expression
   \begin{equation}
       \label{fMusiela-HL}
f_{HL}(t,\tau)= \sigma^2 t \tau + C_1(t)+ C_2(t) e^{-\lambda \tau} + C_3(t) \tau e^{-\lambda \tau} , \qquad t\ge 0, \ \tau  \ge 0 ,
   \end{equation}   
where $C_1$, $C_2$, $C_3$ are coefficients that depends on $t$ and the Nelson and Siegel parameters:
\begin{align*}
    C_1(t)&=r(t)-(z_2+z_3t) e^{-\lambda t}\\
    C_2(t)&=(z_2 + z_3 t) e^{-\lambda t}\\
    C_3(t)&=z_3 e^{-\lambda t}   
\end{align*}

\end{theorem}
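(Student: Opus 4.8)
The statement is an explicit formula, so the natural strategy is a direct computation in three stages: (i) reduce the affine bond-pricing ansatz to a closed form for the forward rate; (ii) pin down the deterministic drift $\theta$ from the requirement that the initial forward curve coincides with the prescribed Nelson--Siegel curve, and evaluate the resulting integral; (iii) pass to the Brace--Musiela variable $\tau = T-t$ and collect terms. No probabilistic input is needed beyond the affine representation and the fact that $\theta$ is deterministic; in particular $r(t)$ enters the final formula only through $f(t,t)$, so it need not be solved for explicitly (though it is convenient to display it in order to make $C_1$ explicit).

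First I would start from the affine representation (\ref{affine}) with $A$, $B$ as in (\ref{A-HoLee})--(\ref{B-HoLee}) and compute $f(t,T) = -\partial_T A(t,T) + r(t)\,\partial_T B(t,T)$. Differentiating under the integral sign and simplifying, as in (\ref{forward-HoLee}), yields
\begin{equation*}
f(t,T) = \int_t^T \theta(s)\,ds - \frac{\sigma^2}{2}(T-t)^2 + r(t),
\end{equation*}
and in particular $f(t,t)=r(t)$, which is the required consistency with the short rate. Next I would impose $f(0,T) = f^*(0,T)$: since $f(0,T) = \int_0^T\theta(s)\,ds - \frac{\sigma^2}{2}T^2 + r(0)$, differentiating in $T$ gives $\theta(T) = \sigma^2 T + \partial_T f^*(0,T)$, and differentiating the Nelson--Siegel curve (\ref{curva-inicial}) gives $\partial_T f^*(0,T) = (z_3 - z_2\lambda)e^{-\lambda T} - z_3\lambda T e^{-\lambda T}$, hence $\theta(t) = \sigma^2 t + (z_3 - z_2\lambda)e^{-\lambda t} - z_3\lambda t e^{-\lambda t}$. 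Integrating term by term on $[t,T]$ (using the elementary primitives of $e^{-\lambda s}$ and $s e^{-\lambda s}$) produces
\begin{equation*}
\int_t^T \theta(s)\,ds = \frac{\sigma^2}{2}(T^2 - t^2) + (z_2 + z_3 T)e^{-\lambda T} - (z_2 + z_3 t)e^{-\lambda t}.
\end{equation*}
Substituting into the formula for $f(t,T)$ and using $\frac{\sigma^2}{2}(T^2-t^2) - \frac{\sigma^2}{2}(T-t)^2 = \sigma^2 t(T-t)$ gives the first displayed identity of the theorem.

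Finally, setting $\tau = T - t$, I would write $e^{-\lambda T} = e^{-\lambda t}e^{-\lambda\tau}$ and $z_2 + z_3 T = (z_2 + z_3 t) + z_3\tau$, so that $(z_2+z_3T)e^{-\lambda T} = (z_2+z_3 t)e^{-\lambda t}\,e^{-\lambda\tau} + z_3 e^{-\lambda t}\,\tau e^{-\lambda\tau}$. Collecting the part that is constant in $\tau$, namely $r(t) - (z_2+z_3 t)e^{-\lambda t}$, together with the coefficient of $e^{-\lambda\tau}$ and the coefficient of $\tau e^{-\lambda\tau}$, and recording that $\sigma^2 t(T-t) = \sigma^2 t\tau$, yields (\ref{fMusiela-HL}) with $C_1,C_2,C_3$ exactly as stated. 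As a sanity check, $C_1(t)+C_2(t) = r(t)$, consistent with $f_{HL}(t,0)=r(t)$.

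\textbf{Main obstacle.} There is no genuine difficulty: the argument is entirely elementary calculus and algebra. The only place requiring care is the bookkeeping in the last step — the algebraic splitting of the $(z_2+z_3T)e^{-\lambda T}$ term under the change of variables into a multiple of $e^{-\lambda\tau}$ and a multiple of $\tau e^{-\lambda\tau}$ — together with keeping the sign of the $\frac{\sigma^2}{2}(T-t)^2$ term straight so that the quadratic-in-$T$ contributions collapse to the single linear term $\sigma^2 t\tau$.
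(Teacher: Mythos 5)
Your proposal is correct and follows essentially the same route as the paper: the paper carries out the derivation of $f(t,T)=\int_t^T\theta(s)\,ds-\tfrac{\sigma^2}{2}(T-t)^2+r(t)$, the calibration of $\theta$ to the Nelson--Siegel curve, and the evaluation of the integral in the text preceding the theorem, and its stated proof is just the final substitution $T=\tau+t$ with the term-collection you describe. Your version simply writes out these same steps as one self-contained computation, and the bookkeeping (including the identity $\tfrac{\sigma^2}{2}(T^2-t^2)-\tfrac{\sigma^2}{2}(T-t)^2=\sigma^2 t(T-t)$ and the splitting of $(z_2+z_3T)e^{-\lambda T}$) checks out.
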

\begin{proof}
The proof follows arranging terms after replacing $T$ by $\tau+t$ in equation (\ref{HoLee-final})
\end{proof}

The expression of the function $\tau \mapsto f_{HL}(t,\tau)$ given in the formula (\ref{fMusiela-HL}) is a sum of a linear function plus a Nelson and Siegel parametric curve. 
\begin{definition}
Let $\lambda>0$ and $g:[0,\infty)\mapsto \mathbb R$ be a function defined as
$$g(\tau)=z_0\tau+z_1+z_2e^{-\lambda\tau}+z_3e^{-\lambda\tau},$$
where $z_0$, $z_1$ and $z_2$ are constant real numbers. We call $g$ an \textit{linearly extended Nelson-Siegel} curve.
\end{definition}

 In particular, in the following subsection we study the consistency of the Ho-Lee model with a family of forward curve manifolds $\mathcal G^\lambda$ generated by linearly extended Nelson-Siegel curves.

\subsection{Consistency between the HL model and forward curve manifolds}

The forward rate curve $f$ given in (\ref{HoLee-final}) satisfies  the Heath, Jarrow and Morton drift condition:
$$\alpha(t,T)=\sigma(t,T)\int_{t}^T \sigma(t,s)\,ds=\sigma^2(T-t),$$ 
and the stochastic differential equation
$$df(t,T)=\alpha(t,T)\,dt+\sigma(t,T)\,dW(t)=\sigma^2(T-t)\,dt + \sigma \,dW(t),$$
with $T  \ge 0 $ and $0 \le t \le T$.
In terms of the Brace and Musiela parametrization, $f_{HL}$ satisfies:
\begin{equation}
df_{HL}(t,\tau)=\left(\frac{\partial}{\partial \tau}f_{HL}(t,\tau)+\sigma^2\,\tau\,\right)dt + \sigma \,dW(t), \label{Musiela-HoLee}
\end{equation}
$t  \ge 0 $, $\tau  \ge 0 $. Starting from equations (\ref{fMusiela-HL}) and (\ref{Musiela-HoLee}), our conjecture is that there exists a forward curve manifold containing linearly extended Nelson-Siegel curves that is consistent with the HL model. In fact, this is stated and proved by the following theorem.

\begin{theorem}\label{curvas-Ho-Lee}
Let $\mathcal Z= \mathbb R^4$, $\lambda>0$ and  
$$G^\lambda(\tau;\beta)=\beta_0\,\tau+\beta_1+\beta_2e^{-\lambda\tau}+\beta_3\tau e^{-\lambda\tau}, \qquad \tau  \ge 0 .$$ 
Let $f_{HL}(t,\cdot)$ be the Ho-Lee forward rate process. Then, for each $\lambda>0$ the forward curve manifold $\mathcal G^\lambda$ is $f_{HL}$-invariant. 
\end{theorem}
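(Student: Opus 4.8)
The plan is to deduce this from the Björk--Christensen characterization, Theorem \ref{theorem-consistency}, rather than to reconstruct the parameter process by hand. The key simplification in the Ho--Lee case is that the volatility $\sigma(t,\tau)=\sigma$ is a deterministic constant, so the Stratonovich correction vanishes, $\phi(t,\tau)\equiv 0$, and $\sigma(t,\tau)\int_0^\tau\sigma'(t,s)\,ds=\sigma^2\tau$. Hence (\ref{drift-condition})--(\ref{volatility-condition}) reduce to checking, for every $(t,\beta)\in[0,\infty)\times\mathbb R^4$, that
\begin{equation*}
G^\lambda_\tau(\cdot;\beta)+\sigma^2(\cdot)\in\mbox{Im}\,[G^\lambda_\beta(\cdot;\beta)],\qquad \sigma\in\mbox{Im}\,[G^\lambda_\beta(\cdot;\beta)].
\end{equation*}

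First I would identify $\mbox{Im}\,[G^\lambda_\beta(\cdot;\beta)]$. Since $G^\lambda(\tau;\beta)$ is linear in $\beta$, its Frechet derivative with respect to $\beta$ is the same bounded linear map for every $\beta$, sending $h=(h_0,h_1,h_2,h_3)$ to the function $\tau\mapsto h_0\tau+h_1+h_2e^{-\lambda\tau}+h_3\tau e^{-\lambda\tau}$. Therefore $\mbox{Im}\,[G^\lambda_\beta(\cdot;\beta)]=E^\lambda:=\mbox{span}\{1,\ \tau,\ e^{-\lambda\tau},\ \tau e^{-\lambda\tau}\}$, a fixed four-dimensional subspace of $C[0,\infty)$ independent of both $\beta$ and $t$. (The four generators are linearly independent, so $G^\lambda$ is an immersion and $\mathcal G^\lambda$ is a genuine manifold, which also legitimizes applying Theorem \ref{theorem-consistency}.)

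Next I would verify the two membership conditions. The volatility condition is immediate, as $\sigma$ is the constant function $\sigma\cdot 1\in E^\lambda$. For the drift condition, differentiate in $\tau$:
\begin{equation*}
G^\lambda_\tau(\tau;\beta)=\beta_0+(\beta_3-\lambda\beta_2)e^{-\lambda\tau}-\lambda\beta_3\,\tau e^{-\lambda\tau}\ \in\ E^\lambda,
\end{equation*}
and adding $\sigma^2\tau\in E^\lambda$ leaves the sum in $E^\lambda$. Thus both conditions of Theorem \ref{theorem-consistency} hold for all $(t,\beta)$, and the theorem yields that $\mathcal G^\lambda$ is $f_{HL}$-invariant.

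There is no genuine analytic obstacle here; the one point deserving explicit mention is that Theorem \ref{theorem-consistency} is phrased for the Stratonovich dynamics (\ref{stratonovich-forward}), so one must note that the deterministic volatility forces $\phi\equiv 0$, which is precisely why (\ref{Musiela-HoLee}) coincides with its Stratonovich form. As a cross-check one could instead produce the driving process directly from (\ref{fMusiela-HL}), namely $Z(t)=\bigl(\sigma^2 t,\ r(t)-(z_2+z_3t)e^{-\lambda t},\ (z_2+z_3t)e^{-\lambda t},\ z_3 e^{-\lambda t}\bigr)$, compute its (Itô, hence Stratonovich) differential using the Ho--Lee solution for $r(t)$, and confirm that $y(t,\tau)=G^\lambda(\tau;Z(t))$ satisfies (\ref{Musiela-HoLee}) with the prescribed initial condition; this recovers the same conclusion while exhibiting $Z$ explicitly.
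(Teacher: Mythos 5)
Your proof is correct and follows essentially the same route as the paper: both apply Theorem \ref{theorem-consistency}, note that the deterministic volatility makes the It\^o and Stratonovich forms coincide, and verify the drift and volatility conditions by checking membership in the span of $\{1,\tau,e^{-\lambda\tau},\tau e^{-\lambda\tau}\}$ (the paper does this by solving for explicit coefficients, you by observing the image is that fixed subspace). Your added remarks on linear independence and the explicit candidate process $Z(t)$ are sound supplements but do not change the argument.
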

\begin{proof}

We shall apply Theorem \ref{theorem-consistency} to see that $\mathcal G^\lambda$ is $f_{HL}$-invariant. Because the volatility term in (\ref{Musiela-HoLee}) is deterministic, the standard differential equation is the same for the Ito and the Stratonovich integral formulation. 
The Frechet derivatives of $G^\lambda$ are given by: 
    \begin{eqnarray*}
        G^\lambda_\beta(\tau, \beta)&=& [ \tau, 1, e^{-\lambda \tau}, \tau e^{-\lambda \tau}] 
\\    
    G^\lambda_\tau(\tau,\beta)&=& \beta_0 + (-\beta_2 \lambda+ \beta_3) e^{-\lambda \tau} - \beta_3 \lambda \tau e^{-\lambda \tau} 
\end{eqnarray*}
In order to prove that $\mathcal G^\lambda$ is $f_{HL}$-invariant, we must check the drift and volatility consistency conditions (\ref{drift-condition}) and (\ref{volatility-condition}). We shall first prove that  $G^\lambda_\tau(\cdot,\beta)+ \sigma^2 \,(\cdot) \in \mbox{Im}[G^\lambda_\beta(\cdot,\beta)]$. This means that there must be real numbers $A$, $B$, $C$ and $D$ such that:
\begin{equation*}
    \beta_0 + (-\beta_2 \lambda+ \beta_3) e^{-\lambda \tau} - \beta_3 \lambda \tau e^{-\lambda \tau}  + \sigma^2 \tau = A \tau + B + C e^{-\lambda \tau} + D \tau e^{-\lambda \tau}.
\end{equation*}
In fact, this is possible taking
\begin{equation*}
    A=\sigma^2 ,\quad B=\beta_0, \quad  C=-\beta_2 \lambda+ \beta_3 \quad\mbox{and} \quad D= - \beta_3 \lambda,
\end{equation*}
so condition (\ref{drift-condition}) is satisfied. To prove condition (\ref{volatility-condition}), we must find $A$, $B$, $C$ and $D$ such that
\begin{equation*}
     \sigma = A \tau + B + C e^{-\lambda \tau} + D \tau e^{-\lambda \tau},
\end{equation*}
and this can be done taking $A=B=D=0$ and $B=\sigma$.
\end{proof}
Theorem \ref{curvas-Ho-Lee} implies that the forward curve manifold $G^\lambda$ is $f_{HL}$-invariant, so the next corollary follows:
\begin{corollary}
\label{corollary-HL}
For every $\lambda>0$,  the forward curve manifold $G^\lambda$ is consistent with the Ho-Lee short rate model.
\end{corollary}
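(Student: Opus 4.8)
The plan is to deduce the corollary directly from Theorem~\ref{curvas-Ho-Lee} together with the Björk--Christensen characterisation of consistency recalled above. Recall that, by definition, the Ho--Lee model $\mathcal M$ is consistent with $\mathcal G^\lambda$ precisely when the drift and volatility conditions (\ref{drift-condition})--(\ref{volatility-condition}) hold for $G=G^\lambda$ and for all $(t,z)\in[0,\infty)\times\mathcal Z$; since here $\mathcal Z=\mathbb R^4$, there are no parameter-range restrictions to keep track of.

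First I would invoke Theorem~\ref{curvas-Ho-Lee}, which asserts that for each fixed $\lambda>0$ the manifold $\mathcal G^\lambda$ is $f_{HL}$-invariant. Then I would apply Theorem~\ref{theorem-consistency} in the ``only if'' direction: $f_{HL}$-invariance of $\mathcal G^\lambda$ forces the two inclusions (\ref{drift-condition}) and (\ref{volatility-condition}) to hold. These are, by the definition of consistency, exactly the conditions saying that $\mathcal M$ and $\mathcal G^\lambda$ are consistent. Since nothing in this chain depends on the particular value of $\lambda$, the statement follows for every $\lambda>0$.

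An equivalent and even more direct route --- really the same argument with one fewer citation --- is to observe that the proof of Theorem~\ref{curvas-Ho-Lee} already exhibits explicit constants $A$, $B$, $C$, $D$ witnessing (\ref{drift-condition}) and (\ref{volatility-condition}); these inclusions \emph{are} the consistency conditions, so the corollary drops out at once. I would probably present the compact version built on Theorem~\ref{theorem-consistency}.

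I expect no genuine obstacle: the corollary is a formal consequence of the equivalences ``$f_r$-invariance $\Leftrightarrow$ drift/volatility conditions $\Leftrightarrow$ consistency'' set up in the preliminaries, applied to the already-established Theorem~\ref{curvas-Ho-Lee}. The one point worth a sentence of care is that the Itô and Stratonovich drifts coincide for the Ho--Lee dynamics, since $\sigma$ is deterministic and hence $\phi\equiv 0$; this was already used in the proof of Theorem~\ref{curvas-Ho-Lee}, so the conditions verified there are literally (\ref{drift-condition})--(\ref{volatility-condition}) with $\phi=0$.
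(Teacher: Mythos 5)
Your argument is correct and matches the paper's: the corollary is read off immediately from Theorem \ref{curvas-Ho-Lee}, whose proof already verifies the drift and volatility conditions (\ref{drift-condition})--(\ref{volatility-condition}) that define consistency. Your added remark that $\phi\equiv 0$ because the Ho--Lee volatility is deterministic is exactly the observation the paper uses as well.
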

Corollary \ref{corollary-HL} implies that, in the particular case that 
$\theta$ in (\ref{HoLee}) is chosen such that the initial forward rate curve fits the Nelson and Siegel curve 
$$f^*(0,\tau)=z_1+z_2e^{-\lambda\tau}+z_3\tau e^{-\lambda\tau}, \qquad \tau  \ge 0 ,$$
then, for each $t  \ge 0 $ the corresponding forward rate curve $f_{HL}(t,\tau)$ can be written as a linearly extended Nelson and Siegel curve. That is:
$$f_{HL}(t,\tau)=\beta_0\tau+\beta_1+\beta_2e^{-\lambda\tau}+\beta_3\tau e^{-\lambda\tau}, \qquad \tau  \ge 0 ,$$
with $\beta_0$, $\beta_1$, $\beta_2$, $\beta_3$ depending only on $t$ and $r(t)$.

\begin{remark}
It must be noted that the manifold $G^\lambda$ does not  contain every Nelson and Siegel curve, but only those with the exponential term equal to $e^{-\lambda\tau}$, $\tau  \ge 0 $. If one considers a wider manifold $\mathcal G$ with parameter set $\mathcal Z=\mathbb R^5$ such that
$$G(\tau;\beta)=\beta_0\,\tau+\beta_1+\beta_2e^{-\beta_4\tau}+\beta_3\tau e^{-\beta_4\tau}, \qquad \tau  \ge 0 ,$$
then the drift consistency condition is not satisfied, because it  requires $\beta_4=\lambda$.
\end{remark}

\section{THE HULL AND WHITE MODEL}
The short rate model proposed by Hull and White \cite{hw}, (henceforth HW) or extended Vasicek model has the following stochastic differential equation:
\begin{equation}
\label{HullWhite}
    dr(t)=(\theta(t) - a r(t))dt + \sigma dW(t)
\end{equation}
where $a$, $\sigma$ are positive real numbers and $\theta$ is a deterministic function. $\theta$ can be chosen in such a way that the initial forward curve $f(0,\cdot)$ fits with the observed data at $t=0$. Let $f^*(0,\cdot)$ be a particular forward curve observed at $t=0$. Then $\theta$ is defined as: 
\begin{equation}\label{theta-HW}
    \theta(t)=\frac{\partial f^*}{\partial T}(0,t) + a f^*(0,t)+\frac{\sigma^2}{2a}(1-e^{-2at})
\end{equation}
The Hull and White short rate model belongs also to the class of affine models, and in this case the functions $A$ and $B$ in equation (\ref{affine}) are given by \cite{brigo}:
\begin{eqnarray}
    B(t,T)&=&\frac{1}{a}(1-e^{-a(T-t)}) \label{A-HW}\\
    A(t,T)&=&\int_t^T \left(\frac{1}{2}\sigma^2 B^2(s,T) - \theta(s) B(s,T)\right) ds\label{B-HW}
\end{eqnarray}
The corresponding forward rate curve is given by:
\begin{eqnarray}
    f(t,T)&=& -\frac{\partial A(t,T)}{\partial T}  + \frac{\partial B(t,T) r(t)}{\partial T}  \notag\\
          &=& -\frac{\partial}{\partial T}\left(\int_t^T \left(\frac{1}{2}\sigma^2 B^2(s,T) - \theta(s) B(s,T)\right) ds\right) + r(t)e^{-a(T-t)}\label{forward-HW}
\end{eqnarray}
Now, by Leibniz rule, we have:
\begin{eqnarray}\
    \frac{\partial}{\partial T}\left(\int_t^T \frac{1}{2}\sigma^2 B^2(s,T) ds\right)&=&  \frac{1}{2}\sigma^2 B^2(T,T)+ \int_t^T \frac{1}{2}\sigma^2 \frac{\partial B^2(s,T)}{\partial T} ds \notag\\
    &=&\sigma^2 \int_t^T B(s,T)\frac{\partial B(s,T)}{\partial T}\,ds\notag\\
    &=&\sigma^2 \int_t^T \frac{(1-e^{-a(T-s)})}{a}\,e^{-a(T-s)}\,ds\notag\\
    &=&-\frac 12 \,\sigma^2\, B(t,T)^2=-\frac {\sigma^2}{2a^2} \, (1-e^{-a(T-t)})^2 \label{first-term}
\end{eqnarray}
\begin{eqnarray*}
    \frac{\partial}{\partial T}\left(\int_t^T \theta(s) B(s,T)\right)&=& \theta(T)B(T,T)+\int_t^T \theta(s)\frac{\partial B(s,T)}{\partial T}\,ds\\
&=& \int_t^T \theta(s)e^{-a(T-s)}\,ds.    
\end{eqnarray*}
We now assume that the initial forward curve is fitted to a Nelson and Siegel parametric curve,
\begin{equation*}
    f^*(0,T)=z_1+z_2 e^{-\lambda T}+z_3 T e^{-\lambda T}, \qquad T  \ge 0 
\end{equation*}
Then, the solution of (\ref{HullWhite}) and the function $\theta$ are given by the following expressions:
\begin{eqnarray*}
    r(t)&=&r(0) e^{-at} + \int_0^t e^{-a(t-u)} \theta(u) du+ \sigma \int_0^t e^{-a(t-u)} dW(u)\\
    &=&r(0) e^{-at} + \alpha(t)-\alpha(0) e^{-at}+ \sigma \int_0^t e^{-a(t-u)} dW(u)\\
\theta(t)&=& az_1+ (z_3-z_2\lambda+az_2)e^{-\lambda t} +(az_3 -z_3\lambda)t e^{-\lambda t}+\frac{\sigma^2}{2a}(1-e^{-2at})
\end{eqnarray*} where
\begin{equation*}
    \alpha(t)=f^*(0,t)+\frac{\sigma^2}{2a^2}(1-e^{-at})^2.
\end{equation*}
Now we can compute explicitly
\begin{eqnarray}
\int_t^T \theta(s)e^{-a(T-s)}\,ds &=& \alpha(T)- \alpha(t) e^{-a(T-t)} \notag\\
&=&f^*(0,T)+ \frac{\sigma^2}{2a^2} (1-e^{-aT})^2 -\alpha(t)e^{-a(T-t)}  \label{second-term}
\end{eqnarray}

Replacing the expressions (\ref{first-term}) and (\ref{second-term}) in the forward rate curve formula (\ref{forward-HW}), we get:
\begin{equation*}
    f(t,T)=-\frac{\sigma^2}{2a^2} (1-e^{-a(T-t)})^2 + f^*(0,T)+ \frac{\sigma^2}{2a^2} (1-e^{-aT})^2 -\alpha(t)e^{-a(T-t)} + r(t)e^{-a(T-t)}
\end{equation*}
With the above computations we arrive to the next theorem.
\begin{theorem}\label{teorema-HullWhite}
Let $r$ denote the Hull and White short rate model with the dynamics stated in (\ref{HullWhite}). Then, if the initial forward rate curve is a Nelson and Siegel parametric curve as in (\ref{curva-inicial}), the corresponding forward rate curve at time $t$ is given by the formula
\begin{equation*}
    f(t,T)=-\frac{\sigma^2}{2a^2} (1-e^{-a(T-t)})^2 + f^*(0,T)+ \frac{\sigma^2}{2a^2} (1-e^{-aT})^2 -\alpha(t)e^{-a(T-t)} + r(t)e^{-a(T-t)}.
\end{equation*}
Let $\tau=T-t$ the time up to maturity. Using the Brace and Musiela parametrization, we denote $f_{HW}$ the forward curve given by $f_{HW}(t,\tau)=f(t,t+\tau)$. Then the forward rate curve $f_{HW}$ has the expression
   \begin{equation}\label{fMusiela-HW}
    f_{HW}(t,\tau)=C1(t) e^{-a\tau} + C2(t) e^{-2a\tau} + C3(t)+ C4(t) e^{-\lambda \tau} + C5(t)\tau e^{-\lambda \tau} 
\end{equation}
where $C_1$, $C_2$, $C_3$, $C_4$ and $C_5$ are coefficients that depends on $t$, $r(t)$ and the Nelson and Siegel curve parameters:
$$C_1(t)=\frac{\sigma^2}{a^2}(1 - e^{-at}) - \alpha(t) +r(t), \qquad 
    C_2(t)=\frac{\sigma^2}{2a^2}  (e^{-2at}-1), $$
$$    C_3(t)=z_1, \qquad  C_4(t)=z_2 e^{-\lambda t} + z_3 t e^{-\lambda t}, \qquad  C_5(t)=z_3 e^{-\lambda t}
.$$
\end{theorem}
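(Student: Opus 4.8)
The first displayed formula for $f(t,T)$ requires no new argument: it is exactly what one obtains by inserting the two Leibniz-rule computations (\ref{first-term}) and (\ref{second-term}) into the forward-rate expression (\ref{forward-HW}), which is precisely the derivation carried out immediately before the statement. The real content of the theorem is the passage to the Brace--Musiela variable, i.e. checking that the substitution $T=t+\tau$ and a regrouping of terms yields (\ref{fMusiela-HW}) with the listed coefficients $C_1,\dots,C_5$. The plan is therefore purely computational: substitute $T=t+\tau$ into each of the five summands of the $f(t,T)$-formula and expand everything in the fixed family of functions $\{\,1,\ e^{-a\tau},\ e^{-2a\tau},\ e^{-\lambda\tau},\ \tau e^{-\lambda\tau}\,\}$ of the variable $\tau$, treating every factor that depends on $t$ alone (namely $e^{-at}$, $e^{-2at}$, $e^{-\lambda t}$, $te^{-\lambda t}$, $r(t)$ and $\alpha(t)$) as a scalar coefficient.

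Concretely, I would process the summands one at a time. The terms $r(t)e^{-a(T-t)}$ and $-\alpha(t)e^{-a(T-t)}$ become $r(t)e^{-a\tau}$ and $-\alpha(t)e^{-a\tau}$, contributing only to the $e^{-a\tau}$-slot. The term $-\tfrac{\sigma^2}{2a^2}(1-e^{-a(T-t)})^2=-\tfrac{\sigma^2}{2a^2}\bigl(1-2e^{-a\tau}+e^{-2a\tau}\bigr)$ contributes $-\tfrac{\sigma^2}{2a^2}$ to the constant, $+\tfrac{\sigma^2}{a^2}$ to the $e^{-a\tau}$-slot and $-\tfrac{\sigma^2}{2a^2}$ to the $e^{-2a\tau}$-slot. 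The term $\tfrac{\sigma^2}{2a^2}(1-e^{-aT})^2=\tfrac{\sigma^2}{2a^2}\bigl(1-2e^{-at}e^{-a\tau}+e^{-2at}e^{-2a\tau}\bigr)$ contributes $+\tfrac{\sigma^2}{2a^2}$ to the constant, $-\tfrac{\sigma^2}{a^2}e^{-at}$ to the $e^{-a\tau}$-slot and $+\tfrac{\sigma^2}{2a^2}e^{-2at}$ to the $e^{-2a\tau}$-slot. Finally $f^*(0,T)=z_1+z_2e^{-\lambda(t+\tau)}+z_3(t+\tau)e^{-\lambda(t+\tau)}$ splits, after writing $e^{-\lambda(t+\tau)}=e^{-\lambda t}e^{-\lambda\tau}$ and $(t+\tau)e^{-\lambda(t+\tau)}=te^{-\lambda t}e^{-\lambda\tau}+e^{-\lambda t}\,\tau e^{-\lambda\tau}$, into the constant $z_1$, the $e^{-\lambda\tau}$-coefficient $z_2e^{-\lambda t}+z_3te^{-\lambda t}$, and the $\tau e^{-\lambda\tau}$-coefficient $z_3e^{-\lambda t}$.

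Summing the contributions slot by slot then produces the five coefficients. In the $e^{-a\tau}$ slot one gets $\tfrac{\sigma^2}{a^2}-\tfrac{\sigma^2}{a^2}e^{-at}-\alpha(t)+r(t)=\tfrac{\sigma^2}{a^2}(1-e^{-at})-\alpha(t)+r(t)=C_1(t)$; in the $e^{-2a\tau}$ slot $-\tfrac{\sigma^2}{2a^2}+\tfrac{\sigma^2}{2a^2}e^{-2at}=\tfrac{\sigma^2}{2a^2}(e^{-2at}-1)=C_2(t)$; in the constant slot the two $\pm\tfrac{\sigma^2}{2a^2}$ cancel, leaving $z_1=C_3(t)$; and the $e^{-\lambda\tau}$ and $\tau e^{-\lambda\tau}$ slots come entirely from $f^*(0,T)$, giving $C_4(t)$ and $C_5(t)$. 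There is no genuine difficulty here, since no step uses anything beyond (\ref{HullWhite})--(\ref{curva-inicial}) together with the derivation preceding the theorem; the one thing to watch is the bookkeeping — keeping $e^{-at}$ and $e^{-2at}$ on the coefficient side so that they are not mistaken for new $\tau$-dependent basis functions, and recalling that $\alpha(t)=f^*(0,t)+\tfrac{\sigma^2}{2a^2}(1-e^{-at})^2$ is already a fully determined function of $t$. One may also remark that if $a=\lambda$ or $2a=\lambda$ some of the five functions coincide, so (\ref{fMusiela-HW}) still holds but ceases to be the unique way of writing $f_{HW}(t,\cdot)$; this does not affect the statement.
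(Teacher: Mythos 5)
Your computation is correct and matches the paper's (implicit) argument exactly: the first formula is just the derivation preceding the theorem, and the Musiela form follows by substituting $T=t+\tau$ and collecting coefficients of $1$, $e^{-a\tau}$, $e^{-2a\tau}$, $e^{-\lambda\tau}$, $\tau e^{-\lambda\tau}$, which is precisely the "arranging terms after replacing $T$ by $\tau+t$" step the paper uses for the analogous Ho--Lee theorem. All five coefficients check out, so nothing further is needed.
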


The expression of the function $\tau \mapsto f_{HW}(t,\tau)$ given in the formula (\ref{fMusiela-HW}) is a sum of an exponential function plus a Nelson and Siegel parametric curve. 
\begin{definition}
Let $\lambda>0$ and $g:[0,\infty)\mapsto \mathbb R$ be a function defined as
$$g(\tau)= c_1e^{-a\tau} + c_2e^{-2a\tau}+z_0\tau+z_1+z_2e^{-\lambda\tau}+z_3e^{-\lambda\tau},$$
with $c_1$, $c_2$, $z_0$, $z_1$ and $z_2$ constant real numbers. We call $g$ an \textit{exponentially extended Nelson-Siegel} curve.
\end{definition}
 In particular, in the following subsection we study the consistency of the Hull and White model with a family of forward curve manifolds $\mathcal G^\lambda$ generated by exponentially extended Nelson-Siegel curves.

\subsection{Consistency between the Hull and White model and forward curve manifolds}
The dynamics of the forward curve process is given in terms of the Ito integral formulation by 
$$
    df(t,T)= \frac{\sigma^2}{a} e^{-a (T-t)} (1-e^{-a(T-t)}) dt + \sigma e^{-a(T-t)} dW(t), 
$$
$0\le t<T<\infty$, and in the Musiela parametrization $\tau=T-t$,
\begin{equation}
\label{HullWhite-Musiela}
     df_{HW}(t,\tau)=\left(\frac{\partial}{\partial \tau}f_{HW}(t,\tau)+ \frac{\sigma^2}{a} e^{-a \tau} (1-e^{-a\tau}) \right)dt + \sigma e^{-a\tau} dW(t)
\end{equation}
$\tau  \ge 0 $, $t\ge 0$. 
Because the volatility term is a deterministic function, the Stratonovich and the Ito integral formulation are the same. 

We state the next theorem:
\begin{theorem}\label{curvas-HullWhite}
Let $\mathcal Z= \mathbb R^5$ and for each $\lambda>0$ let $G^\lambda:\mathcal Z \mapsto C(0,\infty)$ be defined as
   \begin{equation}
       \label{G-HullWhite}
   G^\lambda(\tau;\beta)=\beta_1 e^{-a\tau} + \beta_2 e^{-2a \tau} + \beta_3+ \beta_4 e^{-\lambda \tau} + \beta_5 \tau e^{-\lambda \tau}.
   \end{equation}
Let $f_{HW}(t,\cdot)$ be the Hull and White forward rate process given by the standard differential equation (\ref{HullWhite-Musiela}). Then, the forward curve manifold $\mathcal G^\lambda$ is $f_{HW}$-invariant. 
\end{theorem}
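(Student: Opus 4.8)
The plan is to apply Theorem \ref{theorem-consistency}, mirroring the proof of Theorem \ref{curvas-Ho-Lee}. Since the volatility $\sigma(t,\tau)=\sigma e^{-a\tau}$ appearing in (\ref{HullWhite-Musiela}) is deterministic, the Stratonovich correction term vanishes ($\phi(t,\tau)=0$) and the It\^o and Stratonovich formulations coincide, so it is enough to check the drift condition (\ref{drift-condition}) and the volatility condition (\ref{volatility-condition}) for the process written in It\^o form.

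First I would record the Fr\'echet derivatives
$$G^\lambda_\beta(\tau;\beta)=\left[\,e^{-a\tau},\ e^{-2a\tau},\ 1,\ e^{-\lambda\tau},\ \tau e^{-\lambda\tau}\,\right],$$
$$G^\lambda_\tau(\tau;\beta)=-a\beta_1 e^{-a\tau}-2a\beta_2 e^{-2a\tau}+(\beta_5-\lambda\beta_4)e^{-\lambda\tau}-\lambda\beta_5\,\tau e^{-\lambda\tau},$$
together with the HJM drift term
$$\sigma(t,\tau)\int_0^\tau\sigma(t,s)\,ds=\frac{\sigma^2}{a}e^{-a\tau}(1-e^{-a\tau})=\frac{\sigma^2}{a}e^{-a\tau}-\frac{\sigma^2}{a}e^{-2a\tau},$$
which is exactly the drift coefficient displayed in (\ref{HullWhite-Musiela}). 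The drift condition then reduces to finding real numbers $A,B,C,D,E$ with
$$G^\lambda_\tau(\tau;\beta)+\frac{\sigma^2}{a}e^{-a\tau}-\frac{\sigma^2}{a}e^{-2a\tau}=A e^{-a\tau}+B e^{-2a\tau}+C+D e^{-\lambda\tau}+E\,\tau e^{-\lambda\tau};$$
collecting coefficients gives $A=\sigma^2/a-a\beta_1$, $B=-\sigma^2/a-2a\beta_2$, $C=0$, $D=\beta_5-\lambda\beta_4$, $E=-\lambda\beta_5$, so the left-hand side lies in $\mathrm{Im}[G^\lambda_\beta(\cdot;\beta)]$. The volatility condition is immediate: $\sigma e^{-a\tau}$ is produced by $G^\lambda_\beta$ by putting the coefficient $\sigma$ on the $e^{-a\tau}$ slot and $0$ on the remaining four. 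Both conditions holding for every $(t,\beta)\in[0,\infty)\times\mathbb R^5$, Theorem \ref{theorem-consistency} gives that $\mathcal G^\lambda$ is $f_{HW}$-invariant.

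I do not expect a genuine obstacle; the only substantive point is structural rather than computational. The HJM drift attached to the volatility $\sigma e^{-a\tau}$ generates precisely the two functions $e^{-a\tau}$ and $e^{-2a\tau}$, which is exactly why the factor $e^{-2a\tau}$ must be adjoined to the Nelson--Siegel basis in the definition of $G^\lambda$ for consistency to hold. The one caveat worth flagging is the degenerate cases $\lambda=a$ or $\lambda=2a$: there the five generating functions become linearly dependent, so $G^\lambda$ is not injective and $\mathcal G^\lambda$ is in fact lower-dimensional; the verification above is unaffected, since membership in $\mathrm{Im}[G^\lambda_\beta]$ is all that Theorem \ref{theorem-consistency} requires, but one may prefer to restrict to $\lambda\notin\{a,2a\}$ in the statement.
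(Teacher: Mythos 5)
Your proof is correct and follows essentially the same route as the paper: both verify the drift and volatility consistency conditions of Theorem \ref{theorem-consistency} with the identical choice of coefficients $A=\sigma^2/a-a\beta_1$, $B=-\sigma^2/a-2a\beta_2$, $C=0$, $D=\beta_5-\lambda\beta_4$, $E=-\lambda\beta_5$ for the drift and $A=\sigma$, $B=C=D=E=0$ for the volatility. Your added observation about the degenerate cases $\lambda=a$ or $\lambda=2a$ (where the generating functions become linearly dependent) is a sensible refinement not addressed in the paper, though as you note it does not affect the validity of the argument.
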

\begin{proof}
We note that the Frechet derivatives $G^\lambda_\beta$ and $G^\lambda_\tau$ are given by:
\begin{eqnarray*}
G^\lambda_{\beta}(\tau; \beta)&=& [ e^{-a\tau},e^{-2a \tau}, 1, e^{-\lambda \tau}, \tau e^{-\lambda \tau}] \\
G^\lambda_{\tau}(\tau;\beta)&=&   \beta_1 (-a) e^{-a\tau}+ \beta_2 (-2a) e^{-2a\tau} + (-\beta_4 \lambda+ \beta_5) e^{-\lambda \tau} - \beta_5 \lambda \tau e^{-\lambda \tau}.
\end{eqnarray*}
So, first we prove that
$
    G^\lambda_{\tau}(\cdot,\beta)+ \frac{\sigma^2}{a} e^{-a(\cdot)}(1-e^{-a (\cdot)} ) \in \mbox{Im}[G^\lambda_{\beta}(\cdot,\beta)]
$. So we look for real numbers $A$, $B$, $C$, $D$ and $E$ such that 
\begin{align*}
    \beta_1 (-a) e^{-a\tau}+ \beta_2 (-2a) e^{-2a\tau} + (-\beta_4 \lambda+ \beta_5) e^{-\lambda \tau} - \beta_5 \lambda \tau e^{-\lambda \tau}+ \frac{\sigma^2}{a} e^{-a\tau}(1-e^{-a \tau} ) \\
    = A e^{-a\tau}+ B e^{-2a \tau}+ C+ D e^{-\lambda \tau}+E \tau e^{-\lambda \tau}.
\end{align*}
This is true setting
\begin{align*}
    A=-a \beta_1+ \frac{\sigma^2}{a} ,\;\; B=-2a \beta_2 - \frac{\sigma^2}{a} ,\;\; C=0, \;\; D=-\beta_4 \lambda+ \beta_5 , \;\; E= - \beta_5 \lambda.
\end{align*}
We next prove that
$
     \sigma e^{-a(\cdot)}\in Im[G_{\beta}(\cdot,\beta)]
$, or equivalently, we look for real numbers such that 
\begin{equation*}
     \sigma e^{-a \tau} = A e^{-a\tau}+ B e^{-2a \tau}+ C+ D e^{-\lambda \tau}+E \tau e^{-\lambda \tau}.
\end{equation*}
Setting $A=\sigma$ and $B=C=D=E=0$ the identity follows.

\end{proof}
As a conclusion, we have the following corollary:
\begin{corollary}
For each $\lambda>0$, the Hull and White short rate model is consistent with the forward curve manifold $\mathcal G^\lambda$ given in (\ref{G-HullWhite}).
\end{corollary}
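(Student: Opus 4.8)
The plan is to obtain the corollary as an immediate consequence of Theorem \ref{curvas-HullWhite} together with Björk and Christensen's characterization of consistency recalled in Theorem \ref{theorem-consistency}. First I would invoke Theorem \ref{curvas-HullWhite}: for each fixed $\lambda>0$ it asserts that the forward curve manifold $\mathcal G^\lambda$ generated by $G^\lambda$ in (\ref{G-HullWhite}) is $f_{HW}$-invariant, where $f_{HW}$ is the Hull and White forward rate process written in Musiela form (\ref{HullWhite-Musiela}). Since $\lambda>0$ is arbitrary there, the whole family $\{\mathcal G^\lambda\}_{\lambda>0}$ is covered.

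Next I would recall that, by Theorem \ref{theorem-consistency}, $f_r$-invariance of a manifold is \emph{equivalent} to the pair of conditions (\ref{drift-condition})--(\ref{volatility-condition}) holding for all $(t,z)\in[0,\infty)\times\mathcal Z$. Applying this equivalence with $\mathcal M$ the Hull and White model, $\mathcal G=\mathcal G^\lambda$, $f_r=f_{HW}$ and $\mathcal Z=\mathbb R^5$, the $f_{HW}$-invariance from the previous step delivers precisely that (\ref{drift-condition})--(\ref{volatility-condition}) hold. In fact these conditions were checked explicitly inside the proof of Theorem \ref{curvas-HullWhite}, where the coefficients $A,\dots,E$ realizing the required elements of $\mathrm{Im}[G^\lambda_\beta]$ were exhibited, so consistency can also be read off directly from that computation without redoing anything. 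Then, by the Definition of consistency (an interest rate model is consistent with $\mathcal G$ exactly when (\ref{drift-condition})--(\ref{volatility-condition}) hold), the Hull and White model is consistent with $\mathcal G^\lambda$; since $\lambda>0$ was arbitrary, this yields the statement of the corollary.

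I do not expect a genuine obstacle: the corollary is a definitional restatement of Theorem \ref{curvas-HullWhite} via the bridge provided by Theorem \ref{theorem-consistency}. The only point needing a line of justification is that the Stratonovich drift and the Itô drift coincide here, so that the conditions verified in Musiela/Stratonovich form in Theorem \ref{curvas-HullWhite} are literally the ones in the definition of consistency; this is because the Hull and White volatility $\sigma e^{-a\tau}$ is deterministic, hence the quadratic covariation term $\phi$ in (\ref{stratonovich-forward}) vanishes, as already noted before the statement of Theorem \ref{curvas-HullWhite}. With that observation in place the proof reduces to the two invocations above.
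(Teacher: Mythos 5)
Your proposal is correct and matches the paper's treatment: the paper presents this corollary as an immediate consequence of Theorem \ref{curvas-HullWhite} via the equivalence in Theorem \ref{theorem-consistency} and the definition of consistency, exactly as you argue. Your added remark about the deterministic volatility making the Itô and Stratonovich drifts coincide is the same observation the paper makes just before Theorem \ref{curvas-HullWhite}.
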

\begin{remark}
As in the case of the HL-model, we see that if the initial forward curve is a Nelson-Siegel curve, then the following forward curves are exponentially extended Nelson-Siegel curves, belonging to a particular manifold $\mathcal G^\lambda$. It must be noted that, even when the forward curve process moves on a wider manifold $\mathcal H$ with parameter set $\mathcal Z=\mathbb R^6$,
$$H(\tau;\beta)=\beta_1 e^{-a\tau} + \beta_2 e^{-2a \tau} + \beta_3+ \beta_4 e^{-\beta_6 \tau} + \beta_5 \tau e^{-\beta_6 \tau}, \quad \tau  \ge 0 ,$$
it is not true that this particular manifold is consistent with the HW model. The parameter $\beta_6$ must be equal to the parameter $\lambda$ in (\ref{nelson-siegel}) to get the drift consistency condition (\ref{drift-condition}).
\end{remark}

\section{CONCLUSION}

In the previous sections we presented two families of parametric curves that are consistent with the Ho-Lee and the Hull and White short rate models, respectively. These families contain curves that are extensions of the Nelson and Siegel classical curves, in the sense that each element can be written as the sum of a Nelson and Siegel curve plus a linear function in $\tau$ or plus an exponential function in $\tau$. We proved that each of these short rate models are consistent with a family of forward curve manifolds $\mathcal G^\lambda$. Moreover, for each $\lambda$, $G^\lambda$ is a set of forward curves drived by four factors. Three of them are the Nelson and Siegel factors $1$, $e^{-\lambda \tau}$ and $\tau \,e^{-\lambda \tau}$. In the HL model, the fourth factor is a linear function in $\tau$ multiplied by $\sigma$, the short rate volatility. In the HW model it is an exponential decaying function $c_1e^{-a\tau}+c_2e^{-2a\tau}$, where $a$ is the model parameter. In both cases, the \textit{shape} parameter $\lambda$ remains the same along the forward rate process.

The Ho-Lee and Hull and White short rate models are within the class of arbitrage free models. That means that the initial forward rate curve can be fitted to any observed curve. In equilibrium models with an affine term structure of interest rate, such as Vasicek and CIR, it is not possible to choose the initial curve as a Nelson-Siegel one, and this happens because the drift term has constant parameters. Nevertheless, our conjecture and further research work, is that it is possible to define an algebraically method to find the parameters of a Nelson-Siegel forward curve that is the closest to the model forward rate curve.

\bibliography{Bibliography}
\bibliographystyle{plain}

\end{document}